\documentclass{llncs}
\pagestyle{plain}

\usepackage{amsmath,amssymb,amsfonts,stmaryrd}
\usepackage{graphicx}
\usepackage{xcolor}
\usepackage{hyperref}
\usepackage[utf8]{inputenc}
\usepackage[T1]{fontenc}
\usepackage{booktabs}
\usepackage{xspace}

\usepackage{algorithm}
\usepackage[noend]{algpseudocode}
\usepackage{siunitx}

\newif\ifcomments%
\commentstrue%

\newcommand{\comment}[1]{}

\newcommand{\R}{\ensuremath{\mathbb{R}}}

\begin{document}

\title{Compiling Elementary Mathematical Functions\\ into Finite Chemical Reaction Networks\\
  via a Polynomialization Algorithm for ODEs}
\author{Mathieu Hemery \and Fran\c{c}ois Fages \and Sylvain Soliman}
\institute{Inria Saclay Île-de-France, EPI Lifeware, Palaiseau, France}
\maketitle
\thispagestyle{plain}

\begin{abstract}
  The Turing completeness result for continuous chemical reaction networks (CRN) shows that any computable function over the real numbers
  can be computed by a CRN over a finite set of formal molecular species using at most bimolecular reactions with mass action law kinetics.
  The proof uses a previous result of Turing completeness for functions defined by polynomial ordinary differential equations (PODE),
  the dual-rail encoding of real variables by the difference of concentration between two molecular species,
  and a back-end quadratization transformation to restrict to elementary reactions with at most two reactants.
  In this paper, we present a polynomialization algorithm of quadratic time complexity to transform a system of elementary differential equations to PODE\@.
  This algorithm is used as a front-end transformation to compile any elementary mathematical function,
  either of time or of some input species, into a finite CRN\@.
  We illustrate the performance of our compiler on a benchmark of elementary functions
  relevant to CRN design problems in synthetic biology specified by mathematical functions.
  In particular, the abstract CRN obtained by compilation of the Hill function of order $5$ is
  compared to the natural CRN structure of MAPK signalling networks.
\end{abstract}

\section{Introduction}

Chemical reaction networks (CRN) provide a standard formalism in chemistry and biology to describe, analyze, and also design
complex molecular interaction networks.
In the perspective of systems biology, they are a central tool to analyze the high-level functions of the cell
in terms of their low-level molecular interactions.
In the perspective of synthetic biology, they constitute a target programming language to implement in chemistry new functions
in either living cells or artificial devices.

A CRN can be interpreted in a hierarchy of Boolean, discrete, stochastic and differential semantics~\cite{CSWB09ab,FS08tcs}
which is at the basis of a rich theory for the analysis of their dynamical properties~\cite{Feinberg77crt,CF06siamjam,BFS18jtb},
and more recently, of their computational power~\cite{CSWB09ab,CDS12nc,FLBP17cmsb}.
In particular, their interpretation by Ordinary Differential Equations (ODE)
allows us to give a precise mathematical meaning to the notion of analog computation and high-level functions computed by cells~\cite{DRSL13nature,SK13nature,RRD16bbs},
using the following definitions:

\begin{definition}\cite{FLBP17cmsb,GC03,Shannon41}\label{generates}
  A function $f: \mathbb{R_+}  \to \mathbb{R_+}$ is \emph{generated by a CRN}
  on some species $y$ with given initial concentrations for all species,
  if the ODE associated to the CRN has a unique solution verifying $\forall t\ge 0\ y(t)=f(t)$.
\end{definition}

That first definition states that a positive real function of one positive argument is \emph{generated} by a CRN for some given initial concentration values,
if the graph of that function is given by the temporal evolution of the concentration of one molecular species in that CRN.

\begin{definition}\cite{FLBP17cmsb,GC03}\label{computes}
  A function $f: \mathbb{R_+}  \to \mathbb{R_+}$ is \emph{computed by a CRN}
  for some input species $x$, output species $y$, and initial concentrations given for all species apart from $x$,
  if for any input concentration value $x(0)$ for $x$,
  the ODE initial value problem associated to the CRN has a unique solution
  satisfying\footnote{For the sake of simplicity of the definition given here, we omit the error control mechanism that requires one extra CRN species $z$ verifying:\\ $\forall t>1\ |y(t)-f(x(0))|\le z(t),\ \forall t'>t\ z(t')<z(t)$ and $\lim_{t\rightarrow\infty}z(t)=0$.} $\lim_{t\rightarrow\infty}y(t)=f(x(0))$. 
\end{definition}

The second definition states that the same function is \emph{computed} by a CRN if for any input $x\ge 0$,
and initialization of the CRN input species to value $x$,
the CRN output species converges to the result $f(x)$.
That definition for input/output functions computed by a CRN
is used in~\cite{FLBP17cmsb} to show
the Turing completeness of continuous CRNs in the sense that any computable function over the real numbers
  can be computed by a CRN over a finite set of formal molecular species using at most bimolecular reactions with mass action law kinetics.
  The proof uses a previous result of Turing completeness for functions defined by polynomial ordinary differential equation initial value problems (PIVP)~\cite{BCGH06complexity},
  the dual-rail encoding of real variables by the difference of concentration between two molecular species~\cite{OK11iet,HT79cmsjb},
  and a back-end quadratization transformation to restrict to elementary reactions with at most two reactants~\cite{CPSW05ejde,HFS20cmsb,BP21iwoca}.
  This proof gives rise to a pipeline, implemented in BIOCHAM-4\footnote{\url{http://lifeware.inria.fr/biocham/}. All experiments described in this paper are available at \url{https://lifeware.inria.fr/wiki/Main/Software\#CMSB21}}, to compile any computable real function presented by a PIVP into a finite CRN\@.

However in practice, it is not immediate to define a PIVP that generates or computes a desired function.
In this article, we solve this problem for elementary functions over the reals,
by adding to our compilation pipeline a front-end module to transform any elementary function to a PIVP
which either generates or computes that function, as schematized in Fig.~\ref{fig:pipeline}.

\begin{figure}[h]
\begin{center}
	\includegraphics[width=\textwidth]{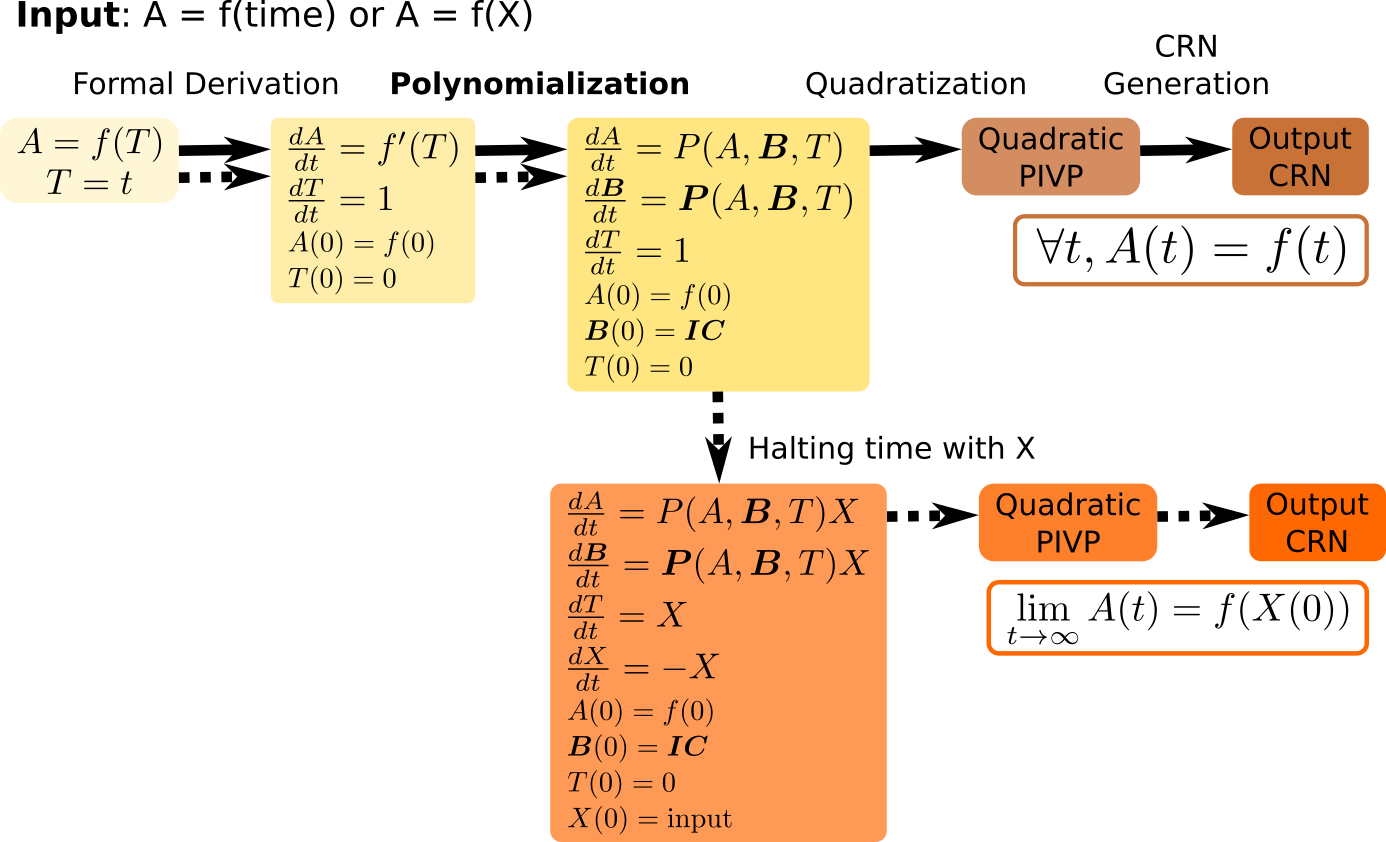}
\end{center}
\caption{Polynomialization step in the complete pipeline for compiling a formally differentiable function $f$ (termination is proved here for elementary functions) into a finite CRN,
  either a function of time (plain arrows) or an input/output function (dashed arrows):
        $P$ and $\boldsymbol{P}$ are polynomials, and $\boldsymbol{B}$ denotes the set of species introduced by
	polynomialization given with initial conditions $\boldsymbol{IC}$.}\label{fig:pipeline}
\end{figure}

More precisely, we present a polynomialization algorithm to transform
any Elementary ODE system (EODE), i.e.~ODE system in explicit form made of elementary differential functions, to a polynomial one (PODE).
This algorithm proceeds by introducing variables
for computing the non-polynomial terms of the input, eliminating such terms from the ODE by rewriting,
and obtaining the ODE for the new variables by formal derivation.
The derivation steps may bring new non-polynomial terms requiring new variables.
We show the termination of that algorithm,
with quadratic time complexity, and that only a linear number of new variables,
in terms of the size of the input expression, are actually needed.

Related work includes one method described in~\cite{S12arXiv} to compute some polynomial abstractions of a non-polynomial system.
That method progresses top-down from some set of possible functions given a priori,
and progressively sieves them down to a polynomial system.
This permits to choose the degree of the polynomial abstraction, but that method may fail
if eitherr the starting set of functions or the chosen degree are too small.
On the other hand, our algorithm proceeds bottom-up, by introducing the functions when needed, with no choice a priori.
A bottom-up approach closer to ours is also mentioned in~\cite{G11IEEE} but for a very restricted grammar of functions while we develop here a
general algorithm and prove its termination on the whole class of elementary mathematical functions.
One can also cite the function \verb$dpolyform$ of Maple PDETools package which returns
  a PODE in implicit form for which the expression given as input is guaranteed
to be a solution, but that does not provide a polynomial expression for the derivative of
each variable, i.e.~a PODE in explicit form, as required for our compilation pipeline.

The rest of the paper is organized as follows.
In the next section, we describe the language of elementary function that are accepted as input of our compilation pipeline into CRN\@.
In Sec.~\ref{algo}, we present a general polynomialization algorithm for elementary ODEs,
prove its termination, and show its quadratic time complexity.
In Sec.~\ref{pipeline} we describe the use of this algorithm as a front-end transformation in our compilation pipeline
to compile any elementary mathematical function into a finite CRN\@.
In Sec.~\ref{eval} we evaluate this approach on a benchmark of elementary functions relevant to CRN design problems in synthetic biology used in~\cite{HFS20cmsb}.
In particular, we compare the CRN synthesized for the Hill function of order 5 to the structure of the natural MAPK signalling CRNs
that have been shown to implement a similar input/output function~\cite{HF96pnas}.
Finally, we conclude on the results achieved and several perspectives for future work.

\section{Input Language of Elementary Functions}

\subsection{Example}

Let us consider the problem of synthesizing a CRN to \emph{generate} the function of time: $A(t) = \log(1+t^2)$ in the sense of Def.~\ref{generates}.
The compilation method described in~\cite{FLBP17cmsb} takes as input a PIVP which admits that function as solution on variable $A$.
Here, we want to automate the construction of such a PIVP\@.

The first step of our front-end transformation is schematized in Fig.~\ref{fig:pipeline}, is
to determine an ODE which admits that function of time as solution.
For this, one can simply take the derivative of the equation with respect to time and set as initial
condition the value for the desired function at $0$, giving:
\[
	\frac{dA}{dt} = \frac{2t}{1+t^2} \qquad A(0) = 1
\]
Then we need to transform this ODE to a PODE\@.
Our polynomialization algorithm will introduce a new
variable $B = \frac{1}{1+t^2}$, and similarly compute its derivative and its initial value, as follows:
\[
	\frac{dB}{dt} = \frac{-2t}{{(1+t^2)}^2} = -2tB^2 \qquad B(0)=1
\]
We also need to introduce a variable $T$ for time with $\frac{dT}{dt} = 1$ and $T(0)=0$.
giving the following PIVP\@:
\begin{align*}
	\frac{dA}{dt} &= 2.T.B &\frac{dB}{dt} &= -2.T.B^2 &\frac{dT}{dt} &= 1\\
	A(0)&=1& B(0)&= 1&T(0)&=0
\end{align*}
Note that the termination of those transformations is not obvious in general. It is proved
in the next section.
That PIVP of degree $3$ can now be used as input of our previous compilation
pipeline~\cite{FLBP17cmsb}. It is first transformed to a quadratic form~\cite{HFS20cmsb}, in this case by introducing one
variable, $BT = B.T$, and removing the time $T$, giving the following quadratic PIVP:
\begin{align*}
	\frac{dA}{dt} &= 2.BT &\frac{dB}{dt} &= -2.BT.B & \frac{d(BT)}{dt} &= \frac{dB}{dt}.T +
	B\frac{dT}{t} =-2.BT^2 + B\\
	A(0)&=1& B(0)&= 1&BT(0)&=0
\end{align*}

One reaction with mass action law kinetics is then generated for each monomial of the ODE.
Since in this example the reactions are well-formed and strict the system is positive (lemma 1 in \cite{FGS15tcs}).
There is thus no need to introduce dual-rail variables for negative values,
and the generated elementary CRN (with rate constants written above the arrow) is:
\begin{align*}
B+\mathit{BT}&\xrightarrow{2} \mathit{BT}&    B&\xrightarrow{1}B+\mathit{BT}\\
2.\mathit{BT}&\xrightarrow{2}\mathit{BT}&    \mathit{BT}&\xrightarrow{2}A+\mathit{BT}&B(0) &= 1
\end{align*}

Now, it is worth noting that if we want to synthesize a CRN that \emph{computes} (instead of generating)
the function $\log(1+x^2)$ of some input $x$ in the sense of Def.~\ref{computes},
the general method described in~\cite{Pouly15thesis,FLBP17cmsb}
consists in introducing a variable $X$, initialized to value $x$,
multiplying the terms of the PIVP for generating the function,
and following a decreasing exponential to halt the PIVP on the prescribed input.
The previous PIVP of degree 3 thus becomes a PIVP of degree 4 by this transformation:
\begin{align*}
	\frac{dX}{dt} &= -X& \frac{dA}{dt} &= 2.T.B.X& \frac{dB}{dt} &= -2.T.B^2.X& \frac{dT}{dt} &= X\\
	X(0)&=x&A(0)&=1& B(0)&=1 & T(0)&=0
\end{align*}
The quadratization algorithm~\cite{HFS20cmsb} now introduces new variables $BX$ and $TBX$ for the corresponding monomials, and removes
variables $T$ and $B$. This generates the following quadratic PIVP\@:
\begin{align*}
	\frac{dX}{dt} &= -X &\frac{dBX}{dt} &= \frac{dB}{dt}X + B\frac{dX}{dt} = -2.BX.TBX-BX &\\
	\frac{dA}{dt} &= 2.TBX &\frac{dTBX}{dt} &= BX.X -2.TBX^2 -TBX &&\\
	X(0)&=x &\quad BX(0)&=x \quad A(0)=1 \quad TBX(0)=0&
\end{align*}
and finally the following elementary CRN which computes the $\log(1+x^2)$ function:
\begin{align*}
\mathit{BX}+\mathit{TBX}&\xrightarrow{2}\mathit{TBX}& X&\xrightarrow{1}\emptyset& \mathit{BX}&\xrightarrow{1}\emptyset\\
\mathit{BX}+X&\xrightarrow{1}\mathit{BX}+\mathit{TBX}+X& \mathit{TBX}&\xrightarrow{1}\emptyset\\
2.\mathit{TBX}&\xrightarrow{2}\mathit{TBX}& \mathit{TBX}&\xrightarrow{2}A+\mathit{TBX}\\
X(0)&=x&\mathit{BX}(0)&=x&A(0)&=1\\
\end{align*}

\subsection{Elementary Functions as Compilation Pipeline Input Language}

In mathematics, elementary functions refer to unary functions (over the reals
in our case) that are defined as a sum, product or composition of finitely many
polynomial, rational, trigonometric, hyperbolic, exponential functions and their inverses.
Most of these functions are defined on the real axis but a few exceptions are worth
mentioning: the inverse of some function is restricted to the image of \R\ by the function
(e.g., $\arccos$ is only defined on $[-1,1]$) and the exponentiation may
be non-analytic in $0$ and is thus considered elementary only on an open
interval that does not include $0$.

The set of elementary functions of $x$ is formally defined as the least set of functions containing:
\begin{itemize}
\item Constants: $2,\ \pi ,\ e$, etc.
\item Polynomials of $x:  x+1,\ x^{2},\ x^{3}-42.x$, etc.
\item Powers of $x: {\displaystyle \sqrt {x},\ \sqrt[{3}]{x}, x^{-4}}$, etc.
\item Exponential and logarithm functions: ${\displaystyle e^{x}, \ln x}$
\item Trigonometric functions: $\sin x,\ \cos x,\ \tan x$, etc.
\item Inverse trigonometric functions: $\arcsin x,\ \arccos x$, etc.
\item Hyperbolic functions: $\sinh x,\ \cosh x$, etc.
\item Inverse hyperbolic functions: $\operatorname{arsinh} x,\ \operatorname{arcosh} x$, etc.
\end{itemize}
and closed by arithmetic operations (addition, subtraction, multiplication, division)
and composition.
Elementary functions are also closed by differentiation but not necessarily by integration.
On the other hand hyper-geometric functions, Bessel functions, gamma, zeta functions, are examples
of (computable) non-elementary functions.

\section{Polynomialization Algorithm for Elementary ODEs}\label{algo}

\subsection{Polynomialization Algorithm}\label{sub:algo}

The core of Alg.~\ref{algo:podeize} for polynomializing an EODE system
is the detection of the elements of the derivatives that are not polynomial and their introduction
as new variables. Then symbolic derivation and syntactic substitution allow us to compute the
derivatives of the new variables and to modify the system of equations accordingly.

It is worth noting that the list of substitutions has to be memorized along the way, therefore handling
an algebraic-differential system during the execution of the algorithm, since they may reappear during
the derivation step. This typically occurs when the derivation graph harbors a cycle like:
$\cos \rightarrow \sin \rightarrow \cos$ (Fig.~\ref{fig:deriv}).

Nevertheless, a particular treatment has to be applied to the case of non-integer or
negative power as they form an infinite set and may thus produce infinite chains of
derivations. This can be seen if we try to apply naively
Alg.~\ref{algo:podeize} on this simple example:
\[\frac{dA}{dt} = A^{0.4}\]
for which we introduce the new variable $B = A^{0.4}$ with
\[\frac{dB}{dt} = 0.4 A^{-0.6} \qquad\frac{dA}{dt} = 0.4 A^{-0.2}\]
At this point, it is tempting to introduce $C = A^{-0.2}$ but that would lead to an
infinite loop, introducing more and more powers of $A$.
\begin{algorithm}
\caption{Polynomialization of an EODE system}\label{algo:podeize}
\begin{algorithmic}[1]
	\State \textbf{Input}: A set of ODEs of the form $\{x' = f_x(x, y, \ldots), y'=f_y(x, y	\ldots), \ldots\}$.
	\State \textbf{Output}: A set of PODEs where the initial variables $x, y,
	\ldots$ are still present and accept the same solutions.

		\State \textbf{Initialize} $Transformations \gets \emptyset$ and $PolyODE \gets \emptyset$
		\While{ODE is not empty}
			\State take and remove $Var' = Derivative$ from $ODE$;
			\State $NewDerivative \gets$ apply $Transformations$ to $Derivative$;
			\State $Terms \gets$ set of maximal non-polynomial subterms of $NewDerivative$; 
			\ForAll{$Term$ in $Terms$}
				\State add $(Term \mapsto NewVar)$ to $Transformations$;
				\State $TermDerivative \gets$ the symbolic derivative of $Term$;
				\State add $(NewVar' = TermDerivative)$ to $ODE$;
			\EndFor
			\State $PolyDerivative \gets$ apply $Transformations$ to $Derivatives$;
			\State add $(Var' = PolyDerivative)$ to $PolyODE$;
		\EndWhile
		\State \textbf{return} $PolyODE$
\end{algorithmic}
\end{algorithm}

This can be easily avoided by introducing $C = \frac{1}{A}$ instead, then:
\[\frac{dB}{dt} = 0.4 B^2 C\]
\[\frac{dC}{dt} = -\frac{1}{A^2} \qquad\frac{dA}{dt} = -C^2 B\]
There is therefore a specific treatment to do when introducing the new variable of an
exponentiation in order to force the
algorithm to use the inverse variable instead of an infinite sequence of variables. For this,
when adding the new variable $N = X^p$ to the system, we explicitly replace the expression
$X^{p-1}$ by $N/X$ in the derivatives, thus making the use of $1/X$ a natural consequence.
Of course, this makes the final PODE non analytic in $X=0$. This is
linked to the fact that exponentiation apart from the
polynomial case is actually not analytic in $0$, and it is thus not
surprising that computation fails if we reach a time where $X=0$.

\subsection{Interval of definition}

Elementary functions are analytic upon open interval of their domain, but may suffer from
non-analyticity on the boundary. For example exponentiation with a non integer coefficient
may be extended by continuity in $0$ but is not analytic here. During our
polynomialization, this kind of behaviour may lead to the appearance of species that
diverge on these points.  This is important as it can be shown that only analytic
functions can be generated by a  PIVP\@.

In particular, the absolute value function over the reals is elementary as it can be
expressed as the composition of a power and root of $x: \lvert x\rvert={\sqrt {x^{2}}}$.
But is not analytic on $0$.
Hence, if we consider the EODE $\frac{dx}{dt} = |x|$, our polynomialization will introduce
the variables $y=|x|$ and $z = \frac{1}{x}$, to obtain the PODE\@: \[\frac{dx}{dt}=
y\qquad\frac{dy}{dt} = y^2.z\qquad \frac{dz}{dt} =  - (z^2.y)\] And when $x$ approaches
$0$, the variable $z$, its inverse, will diverge.

The unicity of the solution of a PIVP is constrained by the initial conditions, but this
unicity is not ensured when passing a non-analyticity. A consequence of this remark is
that our compiled solution is defined from its initial condition ($t=0$) up to the first
non-analyticity of the compiled function.

\subsection{Termination}

Before proving the termination of our algorithm for the input set of elementary functions over the reals, we
can show a general lemma for any set $F$ of formally differentiable, possibly multivariate, real functions.
Let $\overline{F}$ denote the closure of $F \cup \R$ by addition and multiplicationi, that
is the algebra of $F$ over $\R$.

\begin{lemma}\label{lmm:finite}
  For any finite set $F$ of formally differentiable functions over the reals such that $\forall f \in F, f' \in \overline{F}$,
  Alg.~\ref{algo:podeize} terminates.
\end{lemma}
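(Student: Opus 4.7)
The plan is to prove termination by showing that only finitely many new variables can ever be introduced by Algorithm~\ref{algo:podeize}. My key observation is that every new variable represents, before abbreviation, a term of the form $f(a)$ with $f\in F$ and $a$ a subterm of the input right-hand sides.

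More precisely, I would define $\mathcal{S}$ to be the (finite) set of subterms appearing in the input right-hand sides, and prove by induction on the iterations of the \textbf{while} loop the following invariant: every pending ODE in the worklist has a right-hand side whose maximal non-polynomial subterms, after $Transformations$ has been applied, each unfold (using $Transformations$ in reverse) into an expression of the form $f(a)$ with $f\in F$ and $a\in\mathcal{S}$.

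The inductive step rests on two facts. When a variable $N$ is introduced for $f(a)$ with $a\in\mathcal{S}$, symbolic differentiation yields $N'=f'(a)\cdot a'$. The hypothesis $f'\in\overline{F}$ provides a polynomial $P$ and $g_1,\ldots,g_k\in F$ with $f'=P(g_1,\ldots,g_k)$, so $f'(a)=P(g_1(a),\ldots,g_k(a))$; every $F$-application appearing in $f'(a)$ shares the argument $a\in\mathcal{S}$. The factor $a'$ is computed via the chain rule from the current ODEs; by the induction hypothesis those right-hand sides only contain $F$-applications with arguments in $\mathcal{S}$, and any subterm of $a$ that itself belongs to $\mathcal{S}$ gets rewritten into an auxiliary variable by $Transformations$ without leaving $\mathcal{S}$ after unfolding.

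Combining the invariant with the memoization built into $Transformations$ (each pair $(f,a)\in F\times\mathcal{S}$ produces at most one new variable) bounds the total number of auxiliary variables by $|F|\cdot|\mathcal{S}|$, a finite quantity, which yields termination. The main obstacle I anticipate is the careful bookkeeping of the interaction between $Transformations$ and nested $F$-applications: because maximal non-polynomial subterms are detected \emph{after} substitution, some arguments appear syntactically as auxiliary variables rather than as elements of $\mathcal{S}$, and one must check that every such variable still corresponds to a unique element of $\mathcal{S}$ through the chain of substitutions. The special treatment of non-integer and negative exponents described in Sec.~\ref{sub:algo} (replacing $X^{p-1}$ by $N/X$) is precisely what prevents the algorithm from escaping this finite pool in the presence of real powers, and should be folded into the definition of $\mathcal{S}$ by also adding the reciprocals $1/X$ of exponentiation bases.
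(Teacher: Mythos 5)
Your proof is correct and rests on the same key observation as the paper's: because $\overline{F}$ is closed under addition and multiplication but \emph{not} under composition, differentiating a term $f(a)$ only produces applications of functions of $F$ to $a$ or to subterms of $a$, so the pool of candidate new variables is finite. The paper organizes this as a case analysis (constants and variables, multivariate terms, univariate functions without composition, then composition with the remark that the two derivative chains ``do not mix''), whereas you package all cases into the single invariant that every $F$-application arising during the run has its argument in the fixed subterm-closed set $\mathcal{S}$, which together with the memoization in $Transformations$ gives the explicit bound $|F|\cdot|\mathcal{S}|$; this is a slightly more uniform and quantitative rendering of the same argument. Your closing caveat about folding reciprocals of exponentiation bases into $\mathcal{S}$ correctly anticipates the one place where the hypothesis needs care, namely the treatment of real powers handled separately in the corollary.
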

\begin{proof}
In the \textbf{while} loop, since we detect all the non-polynomial parts of the derivatives in one pass, we are sure
that the derivative at hand in each \textbf{while} step becomes polynomial.
New variables may however be introduced in such a step, thus the only possibility of non-termination
is to introduce an infinity of new variables.

The proof of termination proceeds by cases on the structure of the derivatives.
Suppose we have an ODE on the set of variables $\{X_i\}$
with $i \in \left[1, n\right]$ and let us denote by $d(X_i)$ the derivative of $X_i$.
First, it is obvious that if $d(X_i)$ is a single variable or a constant (numeric or
parameter) then we have nothing to do. Similarly, for every expression composed by
addition or multiplication of such terms, they are already polynomial.

If the new variable is a function of several variables (here $2$): $Y = f(X_j, X_k)$, we
have:
\[
	\frac{dY}{dt} = \frac{\partial f(X_j, X_k)}{\partial X_j}d(X_j) +  \frac{\partial f(X_j, X_k)}{\partial X_k}d(X_k)
\]
and as the addition is allowed in polynomial, we can consider separately the two
derivatives. Thus, we can restrict ourselves to the case of functions of a single variable.

For a function of a single variable with no composition we have:
\[
	\frac{dy}{dt} = \frac{df(X_i)}{dt} = f'(X_i) d(X_i)
\]
we have seen that $d(X_i)$ is already polynomial but nothing ensures that $f'$ is. However
	by definition of our set $F$, $f'(X_i)$ may be expressed with other function of $X$ in
	the set $F$, all applied to $X_i$, since composition is not allowed in the construction
	of $\overline{F}$. As the set $F$ is finite, we are sure to terminate after introducing
	at most $|F|$ variables. It is thus important
	to not include the closure by composition in the definition of our set $\overline{F}$.
	Indeed a function such that its derivative would be of the form: $f'(x) = f(f(x))$ may
	lead to an infinite loop for our algorithm.

	Finally, when facing a composition, e.g. $f(g(X))$, we replace it by a new variable, say $y$, with the standard derivation rule:
\[
	\frac{dy}{dt} = \frac{df(g(X_i))}{dt} = f'(g(X_i)) g'(X_i) d(X_i)
\]
	We thus have two different chains of variables to introduce: at first $f(g(x))$, $f'(g(x))$,
	$f''(g(x))$, etc.\ and in a second time: $g'(x)$, $g''(x)$. The important point to
	remark is that there is no mixing: all derivatives of $f$ are applied to $g(x)$ and
	neither to the derivatives of $g$.
	By the same argument as for the case without composition, the polynomialization of both
$f'(g(X_i))$ and $g'(X_i)$ terminates.
\end{proof}
\begin{figure}
\begin{center}
	\includegraphics[width=0.85\textwidth]{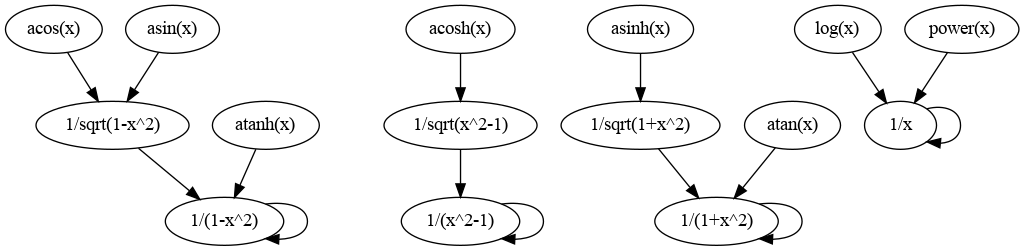}\\
	\includegraphics[width=0.75\textwidth]{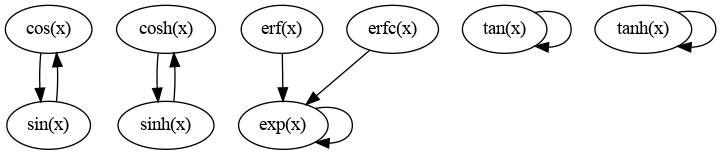}\caption{Dependency graph of the derivatives for the elementary functions.
	Each function is given up to a polynomial composition, hence the
	derivative of $\arccos{x}$ which is $\frac{1}{\sqrt{1-x^2}}$ points to the
	$\frac{1}{\sqrt{x}}$ node. Note that the out-degree of each node is $1$ as the
	derivative of each of these functions is always a single function or the
	composition of a polynomial and a single function.}~\label{fig:deriv}
\end{center}
\end{figure}
\begin{corollary}
	Alg.~\ref{algo:podeize} terminates on elementary functions.
\end{corollary}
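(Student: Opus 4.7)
The plan is to derive the corollary from Lemma~\ref{lmm:finite} by exhibiting, for any elementary input EODE, a finite set $F$ of functions satisfying the lemma's hypotheses. The key observation is that the derivation dependency graph of the basic elementary functions, shown in Fig.~\ref{fig:deriv}, is finite and has out-degree one at each node: the derivative of any basic elementary function is, up to polynomial composition, a single other basic elementary function. Iterating symbolic differentiation from any starting symbol therefore visits only finitely many nodes of this graph.

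First I would handle arithmetic combinations and compositions. Sums and products of elementary terms are already polynomial in their atomic subterms and so need no special treatment beyond the standard line of Alg.~\ref{algo:podeize}. For a composition $f(g(x))$, the chain rule produces two independent derivation chains, one for the derivatives of $f$ evaluated at $g(x)$ and one for the derivatives of $g$ evaluated at $x$, with no mixing between them. This is precisely the situation analysed in the last case of the proof of Lemma~\ref{lmm:finite}, so the finite-graph observation can be applied independently along each chain, bounding the number of new symbols contributed by each composition.

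The main obstacle is the family of non-integer or negative powers $X^p$, for which the naive differentiation rule produces an unbounded sequence $X^p, X^{p-1}, X^{p-2}, \ldots$ and would force $F$ to be infinite. Here I would appeal to the specific power-handling rule described in Sec.~\ref{sub:algo}: when $N = X^p$ is introduced, the algorithm rewrites the occurrence of $X^{p-1}$ in $N'$ as $N/X$, so that $N'$ becomes a polynomial in $N$ and the single additional variable $C = 1/X$. Since $C' = -C^2 X'$ is likewise polynomial in $C$ and the already present symbols, each power base contributes at most two new symbols, restoring finiteness.

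Finally, I would take $F$ to be the union, over the basic elementary functions syntactically occurring in the input EODE, of their finite derivation closures read off from Fig.~\ref{fig:deriv}, augmented with the reciprocals $1/X$ demanded by the power rule. By construction $F$ is finite and $f' \in \overline{F}$ for every $f \in F$, so Lemma~\ref{lmm:finite} yields termination. The delicate step will be to argue cleanly that the power-handling rewrite actually delivers the closure property required by the lemma, since the raw exponentiation family does not satisfy it; all the other cases reduce to a bookkeeping exercise on the finite dependency graph of Fig.~\ref{fig:deriv}.
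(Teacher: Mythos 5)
Your proposal is correct and follows essentially the same route as the paper: reduce to Lemma~\ref{lmm:finite} via the finite, out-degree-one dependency graph of Fig.~\ref{fig:deriv}, and handle the infinite exponentiation family separately by the rewrite $\alpha x^{\alpha-1}=\alpha\,x^{\alpha}\cdot\frac{1}{x}$ so that only the power itself and the reciprocal are introduced. If anything, your explicit construction of $F$ as the finite reachability closure of the functions occurring in the input (plus the needed reciprocals) is slightly more careful than the paper's phrasing, which loosely takes $F$ to be "the set of elementary functions".
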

\begin{proof}
  Let $F$ be the set of elementary functions. 
	Fig.~\ref{fig:deriv} displays the dependency graph of the set $F$ and its derivatives. 
        Apart from the exponentiation, we can see
	that this set obeys to the condition of Lemma~\ref{lmm:finite}.

	The only difficulty thus comes from the exponentiation as it actually defines an infinite set
	of functions of the form $f(x) = x^\alpha$ for any real constant $\alpha$. Nevertheless, as explained
	above, the algorithm terminates also in this case since we express the derivative in the form:
	\begin{equation}
		f'(x) = \alpha x^{\alpha-1} = \alpha f(x) \frac{1}{x}
	\end{equation}
	and the inverse function is in $F$, henceforth the case of exponentation also terminates.
\end{proof}

\subsection{Complexity}

To estimate the computational time complexity of Alg.~\ref{algo:podeize}, let us first determine a bound on the number
of new variables introduced in the system.
\begin{proposition}
Alg.~\ref{algo:podeize} introduces at most a linear number of variables in the size of the input.
  \end{proposition}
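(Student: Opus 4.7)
The plan is to bound the total number of new variables by $O(|E|)$, where $|E|$ is the size of the input measured as the number of nodes in the expression trees of the EODE. The strategy is to exhibit a static labelling that maps each introduced variable injectively into the finite Cartesian product $\Phi \times \mathrm{Sub}(E)$, where $\Phi$ is a constant-size universe of elementary function symbols and $\mathrm{Sub}(E)$ is the set of subterms of the input.

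First, I would define $\Phi$ as the closure under formal differentiation of the set of elementary function symbols appearing in the input, extended with the reciprocal symbol $1/(\cdot)$ used by the exponentiation rule. By the bounded cycle structure visible in Fig.~\ref{fig:deriv} and by the reduction of $X^{\alpha-1}$ to $V/X$ in the exponentiation case, the set $\Phi$ is finite and of constant size (independent of the input). Next, I would prove by induction on the number of iterations of the \textbf{while} loop that every term for which a new variable is ever created has the form $\phi(S)$ with $\phi \in \Phi$ and $S \in \mathrm{Sub}(E)$. The base case holds because the maximal non-polynomial subterms of the input clearly fit this shape. For the inductive step, a freshly introduced variable $V = \phi(S)$ generates the derivative $\phi'(S)\cdot S'$ via the chain rule; the first factor is of the claimed form because $\phi' \in \Phi$ and $S$ is unchanged, while $S'$ is a polynomial combination of terms $\chi(S'')$ where $\chi \in \Phi$ and $S''$ is a strict subterm of $S$, hence still in $\mathrm{Sub}(E)$. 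The crucial point (already used in the proof of Lemma~\ref{lmm:finite}) is that successive derivatives of $\phi$ are always applied to the same $S$, never to derivatives of $S$, so the composition structure is never re-entered and no new inner expressions are created.

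Counting the resulting pairs then gives at most $|\Phi|\cdot|\mathrm{Sub}(E)| = O(|E|)$ new variables. The main obstacle will be to justify rigorously that the $Transformations$ substitution interacts cleanly with the detection of \emph{maximal} non-polynomial subterms, so that no new variable escapes the labelling: in particular, that after substitution the outermost operator of any non-polynomial subterm is still in $\Phi$ and its unique argument is still (a substituted version of) a subterm of $E$. I would address this by tracking the invariant that $Transformations$ maps elements of $\Phi \times \mathrm{Sub}(E)$ to single variable names and never rewrites strictly inside $\mathrm{Sub}(E)$, so its application commutes with the classification. A routine check of the exponentiation rule then confirms that it preserves the invariant by construction, and the linear bound follows.
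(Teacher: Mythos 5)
Your proposal is correct and follows essentially the same route as the paper: both arguments rest on the finiteness of the derivative-closure of the elementary function symbols (the reachable sets in the dependency graph of Fig.~\ref{fig:deriv}, plus the reciprocal trick for exponentiation) and on the fact that successive derivatives of an outer function stay applied to the same inner subterm of the input, yielding a bound of the form (constant) $\times$ (input size). Your injective labelling into $\Phi \times \mathrm{Sub}(E)$ is simply a more explicit formalization of the paper's count of ``at most $\ell$ variables per function occurrence.''
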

\begin{proof}
For each elementary function $f$, represented in Fig.~\ref{fig:deriv}, we have to
introduce recursively all the functions on which rely its derivative, in terms of the
(directed) dependency graph, this is the reachable set starting from $f$. In this case
the largest reachable set is of cardinal $3$. More generally, let us call $\ell$ the
cardinal of the largest set of reachable nodes starting form a single node.  Then, each
elementary function that is not already polynomial introduces at most $\ell$ variables.

Hence, we have to introduce at most $\ell F$ variables where $F$ is the number of
functions used in the ODE\@. Of course, $F$ is bounded by the size of the input.
  \end{proof}
\begin{proposition}
Alg.~\ref{algo:podeize} has a quadratic time complexity.
  \end{proposition}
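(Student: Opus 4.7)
The plan is to combine the linear bound on the number of introduced variables from the previous proposition with a linear bound on the cost of each iteration of the \textbf{while} loop, yielding quadratic total time.

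First, I would observe that the \textbf{while} loop runs $O(n)$ times, where $n$ is the size of the input EODE. Indeed, the loop processes one ODE equation at a time, and the total number of equations is the number of original equations (at most $n$) plus the number of equations added for newly introduced variables, which is $O(n)$ by the preceding proposition.

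Next, I would bound the work done inside a single iteration. The key quantities are the size of $Transformations$ (at most $O(n)$, since one entry is added per introduced variable) and the size of the $Derivative$ expression being processed. For the original equations, these sizes sum to at most $n$; for a newly introduced variable arising from an elementary subterm $Term$, its derivative is computed symbolically by one application of the chain rule together with the derivative rules displayed in Fig.~\ref{fig:deriv}, each of which is an expression of bounded size in $Term$ (and in $1/X$ in the exponentiation case). Hence every $Derivative$ processed has size $O(n)$. Applying the $O(n)$ substitutions of $Transformations$ to an expression of size $O(n)$ is $O(n^2)$ in a naive implementation or $O(n)$ using a hash-based lookup during a single traversal; detecting maximal non-polynomial subterms and performing the symbolic differentiation are single passes over the expression, hence $O(n)$ as well. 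In the naive case one iteration costs $O(n)$ and the overall cost is $O(n)\cdot O(n)=O(n^2)$; with the faster substitution scheme the same bound holds trivially.

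The main obstacle in turning this sketch into a rigorous proof is controlling the size of intermediate expressions throughout the execution: one must check that the symbolic differentiation step never produces expressions whose size grows more than a constant factor, which relies precisely on the structural property of elementary functions used in Lemma~\ref{lmm:finite} and Fig.~\ref{fig:deriv} that each derivative is expressible as a polynomial composition of a bounded number of functions from $F$ applied to the same argument. Granted this, the accounting described above gives the claimed $O(n^2)$ complexity.
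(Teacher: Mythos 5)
Your argument is essentially the paper's own: a linear number of introduced variables (from the preceding proposition) times a linear cost per step for substitution and symbolic differentiation yields the quadratic bound, and you in fact justify in more detail than the paper why each processed expression stays of size $O(n)$. The only blemish is the sentence asserting that ``in the naive case one iteration costs $O(n)$'' right after you bounded naive substitution by $O(n^2)$ --- you presumably meant the hash-based scheme there; with that reading the proof matches the paper's and is correct.
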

\begin{proof}
To introduce a new variable, we have to first compute its derivative, and then substitute
its expression in the reminder of the ODE\@. Both operations are
linear in the size of the current system size, and we just have seen that this one will grow
	at most as a linear function of the input size, giving only a linear dependency.
Thus, our algorithm is quadratic in the size of the input.
\end{proof}

\subsection{Remark on the Compilation of the Exponentiation}\label{sec:power}

In the compilation of the exponentiation described in Sec.~\ref{sub:algo} we introduce two variables,
but this raises an interesting question concerning the conditions under which a
differential equation of the form:
$\displaystyle\frac{dA}{dt} = A^\alpha$
can be set in polynomial form by introducing only one variable.
Indeed, if we introduce $B = A^\beta$ we have:
\[\frac{dB}{dt} = \beta A^{\beta-1}\qquad \frac{dA}{dt} = \beta A^{\alpha+\beta-1}\]
and to be polynomial we need to find four positive integers $i,j,k,l$ such that:
\begin{align*}
	\alpha &= i + j \beta\\
	\alpha + \beta - 1 &= k + l \beta
\end{align*}
We thus need that $\alpha$ be of the form:
$\displaystyle\alpha = \frac{j(k+1)+i(1-l)}{j+1-l}$

Clearly, only a fractional power can be reduced in one step. Now suppose that
$\alpha =\frac{p}{q}$ then we have by identifying numerator and denominator and setting $l=2$:
\[j = q-1+l = q+1 \qquad\text{and}\qquad k = \frac{p+i}{1+q}-1\]
an equation that always admits solutions with $i$ and $k$ positive.
For example $\alpha =
\frac{1}{3}$ may be solved with $\beta = \frac{-2}{3}$ and $i=3,j=4,k=0,l=2$. But this uses
a polynomial of order $7$ which may impact negatively the final quadratization phase~\cite{HFS20cmsb}.
For these reasons, we chose to treat exponentiation by systematically introducing two variables.



\section{CRN Compilation Pipeline for Elementary Functions}\label{pipeline}

\subsection{Detailed Example}

To illustrate the behavior of our complete pipeline schematized in Fig.~\ref{fig:pipeline},
let us consider the compilation of the Hill function $H = \frac{x}{1+x}$ 
into a CRN that computes $H(x)$ in the sense of Def.~\ref{computes},
i.e., such that the final concentration of species $H$ gives the result $\lim_{t
\rightarrow \infty} H(t) = \frac{x}{1+x}$.

The first step of the front-end transformation is to introduce a pseudo-time variable: $T=t$ and replace the input $X$ by $T$ in the
expression $H = \frac{T}{1+T}$, and then to compute the formal derivative of $H$ to obtain the ODE\@:
\[
	\frac{dH}{dt} = \frac{1}{{(1+T)}^2}\qquad \frac{dT}{dt} = 1
\]

The second step is to polynomialize that ODE with Alg.~\ref{algo:podeize}.
This is done by examining the right hand
side of the derivative of $H$ and introduce as new variable $A = \frac{1}{{(1+T)}}$ which
allows us to rewrite $\frac{dH}{dt} = A^2$. We also need to compute the derivative of this new
variable:
\[
	\frac{dA}{dt} = -A^2,\ \ \ \ A(0)=1
\]
This PIVP generates the time function $H(t)$ in the sense of Def.~\ref{generates}.

Now, we enter the compilation pipeline from PIVP described in~\cite{FLBP17cmsb}.
To compute the function $H(x)$ of input $x$, a new variable $X$ that obeys a decreasing
exponential is introduced and used to halt all other derivatives at time $x$:
\begin{align*}
	\frac{dH}{dt} &= A^2.X&
	\frac{dX}{dt} &= -X\\
	\frac{dA}{dt} &= -A^2.X&
	\frac{dT}{dt} &= X\\
		  H(0)&=0&
		  X(0)&=x\\
		  A(0)&=1&
		  T(0)&=0
\end{align*}
Then the quadratization step~\cite{HFS20cmsb}, introduces the new variable $B = A.X$. Interestingly, 
intermediary variables $X$ and $T$ are no longer used and removed. We finally get:
\begin{align*}
	\frac{dH}{dt} &= A.B &
	\frac{dB}{dt} &= -B -B^2 &
	\frac{dA}{dt} &= -A.B \\
		  H(0)&=0&
		  B(0)&=x&
		  A(0)&=1
\end{align*}
And the compiled CRN is:
\begin{align*}
	B &\xrightarrow{1} \emptyset &
	A+B &\xrightarrow{1} B+H&
	2.B &\xrightarrow{1} B\\
	A(0) &= 1 &
	B(0) &= \text{input}
\end{align*}

\subsection{Implementation}

Alg.~\ref{algo:podeize} is implemented by rewriting formal expressions using a simple algebraic normal form and standard derivation rules.

The most computationally expensive step of our complete compilation pipeline in Fig.~\ref{fig:pipeline}
is the quadratization of the intermediate PIVP to a PIVP of order at most $2$. This step is necessary to restrict ourselves
to elementary reactions with at most two reactants which are more amenable to real implementations with real enzymes.
While the existence of a quadratic form for any PODE can be simply shown by introducing an exponential number of variables~\cite{CPSW05ejde},
the problem of minimizing the dimension of that quadratization is NP-hard~\cite{HFS20cmsb}.
In the implementaiton of BIOCHAM used in the next section, we use both the MAXSAT algorithm described in~\cite{HFS20cmsb} (option \verb$sat_species$ below)
and a heuristic algorithm (option \verb$fastnSAT$ below)
to first obtain a subset of variables guaranteed to contain a quadratic solution,
and then call the MAXSAT solver (RC2\footnote{\url{https://pysathq.github.io/docs/html/api/examples/rc2.html}})
to minimize the dimension in that quadratization.

\section{Evaluation}\label{eval}

In this section, we consider the benchmark of functions already considered in~\cite{HFS20cmsb} for quadratization problems.
Table~\ref{tab:bench} gives some performance figures about the complete compilation pipeline
in terms of total computation time and size of the synthesized CRNs,
with the two options discussed above for quadratization.
\begin{table}
\centering
	\begin{tabular}[pos]{lrrr@{\hskip 10pt}rrr}
		\toprule
			   & \multicolumn{3}{c}{\texttt{fastnSAT}} & \multicolumn{3}{c}{\texttt{sat\_species}} \\
    \cmidrule(lr){2-4}\cmidrule(lr){5-7}
    Function & time & number of  & number of & time & number of  & number of \\
    & (ms) & species & reactions & (ms) & species & reactions \\
    \midrule
    Hill1 & 80 & 4 & 5 & 85 & 3 & 3 \\
    Hill2 & 90 & 6 & 10 & 82 & 5 & 8 \\
    Hill3 & 100 & 6 & 10 & 115 & 6 & 12 \\
    Hill4 & 100 & 7 & 13 & 162 & 7 & 13 \\
    Hill5 & 110 & 8 & 16 & 550 & 7 & 11\\
    Hill10 & 160 & 13 & 31 & \multicolumn{3}{l}{timeout}\\
    Hill20 & 380 & 23 & 61 & \multicolumn{3}{l}{timeout}\\
    Logistic & 80 & 3 & 5 & 85 & 3 & 5\\
    Double exp. & 80 & 3 & 4 & 85 & 3 & 4\\
    Gaussian & 85 & 3 & 4 & 85 & 3 & 4\\
    Logit & 95 & 4 & 7 & 100 & 4 & 6\\
    \bottomrule
	\end{tabular}
	\caption{Performance results on the benchmark of CRN design problems of~\cite{HFS20cmsb}
          in terms of total compilation time, and size of the synthesized CRN with two options for quadratization.}\label{tab:bench}
\end{table}
It is worth noting that those synthetic CRNs are not unique and that other CRNs could be synthesized
for the same function, by making different choices in both our polynomialization and quadratization algorithms.
Even when imposing optimality in the number of introduced variables, there may exist several optimal CRNs. For example,
the two CRNs obtained for Hill4 compiled with the two options for quadratization are
different but both have the same number of species and reactions (Table~\ref{tab:bench}).

Let us examine the influence graphs of those synthetic CRNs since they provide a more compact abstract representation of the reaction graph~\cite{FS08tcs}.
Fig.~\ref{fig:influence} depicts the influence graphs between molecular species of the synthesized CRNs for
the hill
functions of order $3$ (Fig.~\ref{fig:influence}A.) and $5$
(Fig.~\ref{fig:influence}B.), the logistic function
(Fig.~\ref{fig:influence}C) and the square of the cosine function (Fig.~\ref{fig:influence}).
One can remark on those examples that the outputs of the synthesized CRN do not participate in any feedback reaction.
This is however not necessarily the case of the CRNs synthesized by our pipeline,
as shown for instance by the cosine function~\cite{FLBP17cmsb}.
\begin{figure}[h]
\begin{center}
	\includegraphics[width=0.8\textwidth]{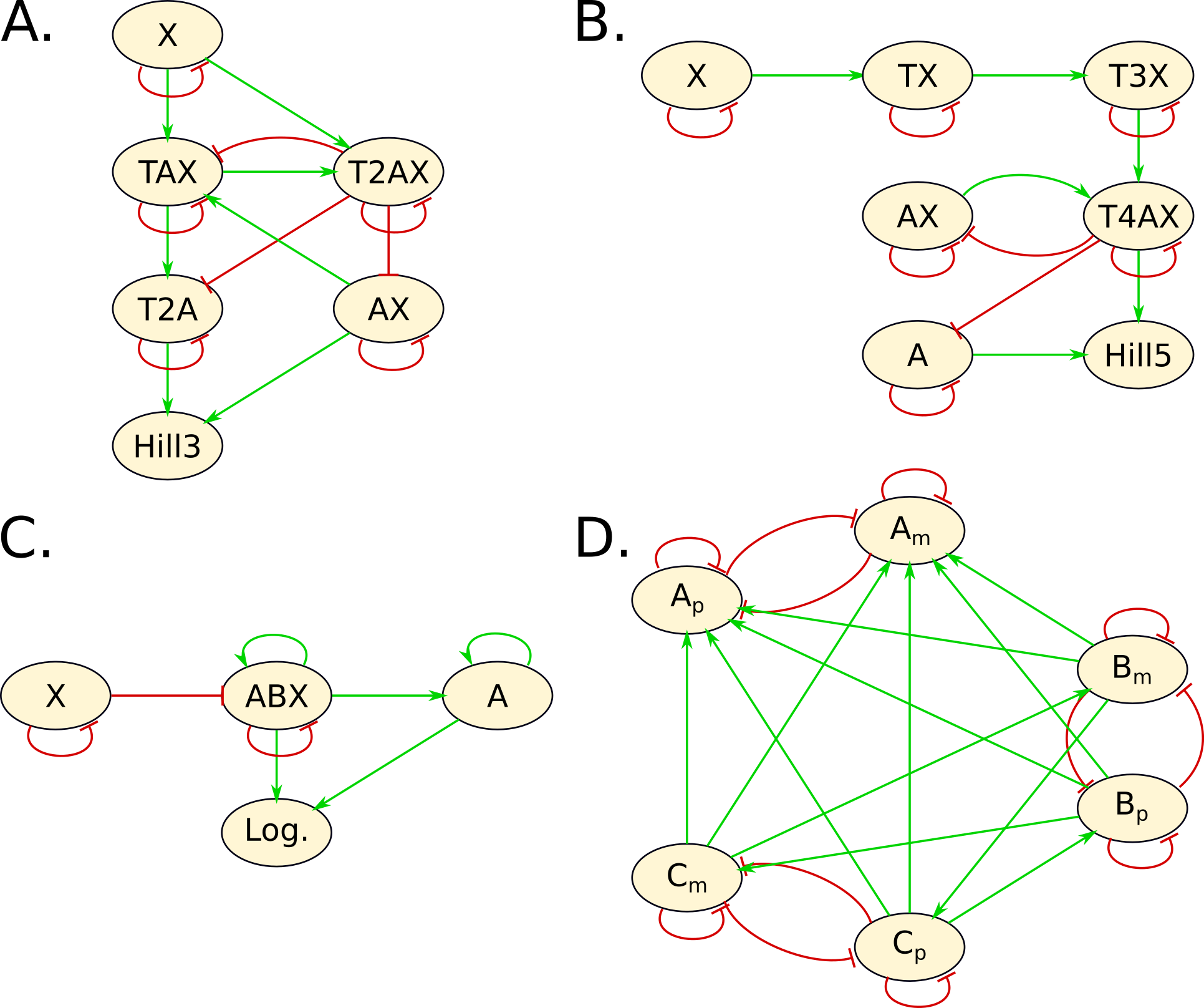}\\
	\caption{Influence graphs of four of the synthetic CRNs of Table~\ref{tab:bench}. \textbf{A.} and \textbf{B.}
	respectively implement the Hill function of order $3$ and $5$. \textbf{C.} corresponds
	to the logistic function and \textbf{D.} computes the square of the cosine of the time.
	In this last example, the output is only read on $A_p$ but the presence of its negative
	part $A_m$ is actually a crucial part of the computation despite having an essentially
	null concentration.}\label{fig:influence}
\end{center}
\end{figure}
More precisely, the Hill3 CRN in Table~\ref{tab:bench} synthesized with the \verb!sat_species! option is the following:
	\begin{align*}
		\mathit{x} &\xrightarrow{1} \emptyset&
		\mathit{Ax} &\xrightarrow{1} \emptyset\\
		\mathit{TAx} &\xrightarrow{1} \emptyset&
		\mathit{T2Ax} &\xrightarrow{1} \emptyset\\
		\mathit{Ax} + \mathit{T2Ax} &\xrightarrow{3} \mathit{T2Ax}&
		\mathit{Ax} + \mathit{x} &\xrightarrow{1} \mathit{Ax} + \mathit{TAx} + \mathit{x} \\
		\mathit{T2Ax} + \mathit{TAx} &\xrightarrow{3} \mathit{T2Ax}&
		\mathit{TAx} &\xrightarrow{2} \mathit{T2A} + \mathit{TAx} \\
		\mathit{T2A} + \mathit{T2Ax} &\xrightarrow{3} \mathit{T2Ax}&
		\mathit{TAx} + \mathit{x} &\xrightarrow{2} \mathit{T2Ax} + \mathit{TAx} + \mathit{x} \\
		2.\mathit{T2Ax} &\xrightarrow{3} \mathit{T2Ax}&
		\mathit{Ax} + \mathit{T2A} &\xrightarrow{2} \mathit{Ax} + \mathit{T2A} + hill3 \\
		\mathit{x}(0) &= \text{input}&
		\mathit{Ax}(0) &= \text{input}
	\end{align*}
The Hill5 CRN in the table synthesized with the same option is:
	\begin{align*}
		\mathit{x} &\xrightarrow{1} \emptyset&
		\mathit{Ax} &\xrightarrow{1} \emptyset\\
		\mathit{Tx} &\xrightarrow{1} \emptyset&
		\mathit{T3x} &\xrightarrow{1} \emptyset\\
		\mathit{T4Ax} &\xrightarrow{1} \emptyset&
		\mathit{A} + \mathit{T4Ax} &\xrightarrow{5} \mathit{T4Ax} + hill5\\
		\mathit{Ax} + \mathit{T4Ax} &\xrightarrow{5} \mathit{T4Ax}&
		2.\mathit{x} &\xrightarrow{1} \mathit{Tx} + 2.\mathit{x}\\
		2.\mathit{Tx} &\xrightarrow{3} \mathit{T3x} + 2.\mathit{Tx}&
		\mathit{Ax}+\mathit{T3x} &\xrightarrow{4} \mathit{Ax}+\mathit{T3x}+\mathit{T4Ax}\\
		2.\mathit{T4Ax} &\xrightarrow{5} \mathit{T4Ax}&
		x(0)&=\text{input} \\
		\mathit{A}(0)&=1&
		\mathit{Ax}(0)&=\text{input}
	\end{align*}

From our computational point of view, the Hill5 CRN above is one synthetic analog of the natural MAPK CRN among others.
Indeed, the natural MAPK signalling network
has been shown in~\cite{HF96pnas}  to compute an ultrasensitive input/output function
which is well approximated by a Hill function of order 4.9.
Both the natural MAPK CRN and the synthetic Hill5 CRN thus compute a similar input/ouput function and it makes sense to try to compare their structure.
In term of size, the MAPK model of~\cite{HF96pnas}  comprises 22 species and 30 reactions,
while the Hill5 CRN synthesized by our pipeline now uses only $7$ formal molecular species and $11$ elementary reactions.
This shows a huge improvement with respect to our first results reported in~\cite{FLBP17cmsb}
where several tens of reactions were synthesized for Hill functions.
In term of topological structure,
we have checked that there exists (several) subgraph epimorphisms~\cite{GSF10bi} mapping the MAPK CRN to that Hill5 CRN,
meaning that the MAPK CRN somehow contains the core structure of the synthetic Hill5 CRN in some non-trivial sense.
The biological significance of those relationships is however still unclear,
although one could expect to explain it in terms of robustness properties~\cite{Cardelli14bmc}.
It is also worth noting that in the natural MAPK CRN structure,
the input is a catalyst that is not consumed by the downward reactions,
whereas in our CRN synthesis scheme, the input is generally consumed by the downward reactions.
The MAPK network  thus illustrates a case of online analog computation
which is currently not treated by our theoretical framework.

\section{Conclusion and Perspectives}

We have presented an algorithm to transform any system of elementary ordinary
differential equations to a polynomial ordinary differential equation system 
preserving the solutions of the original variables.
This algorithm of  quadratic time complexity introduces at
most a linear number of new variables.
This algorithm allows us to automatically compile any elementary mathematical function
into a finite CRN, using a pipeline of transformations starting from the formal derivation
of the elementary function to generate or compute, the polynomialization of the elementary
ODE, and continuing with the previous pipeline of~\cite{FLBP17cmsb} for the dual-rail
encoding of negative values of the PODE~\cite{HT79cmsjb,OK11iet}, the quadratization of the PODE~\cite{HFS20cmsb}, and the synthesis
of elementary reactions for the quadratic ODEs~\cite{FGS15tcs}.

The implementation in BIOCHAM-4 of this complete pipeline has been used to illustrate the CRNs
synthesized for a variety of elementary mathematical functions used as specification.  In
particular, the CRN synthesized for the Hill function of order 5 provides a synthetic
analog of the MAPK signalling network which has been shown to compute a similar
ultrasensitive input/output function~\cite{HF96pnas}.
In this compilation process, the quadratization part is the most complex one since
minimizing the dimension of the result is a NP-hard problem~\cite{HFS20cmsb}. 
On the benchmark presented here, our maxSAT implementation is sufficient but we also use
by default a heuristic algorithm that trades optimality for better performance. It
should also be noted that a new algorithm has been recently proposed for the global optimization
problem in~\cite{BP21iwoca}.

This work may be improved in several directions.
We might extend this approach to multivariate functions. This is not trivial as the trick of
halting the time at the input value needs be generalized to several inputs. 

Another important point is to investigate is the variety of different CRNs that can be synthesized
by our pipeline. As pointed out earlier, there may be several optimal solutions to the
quadratization problem and there may similarly be several polynomializations of a given ODE
introducing the same number of variables. Our pipeline make choices to deterministically propose one
solution, but being able to explore the set of solutions and compare their properties with
respect to metrics like robustness to initial conditions or reaction rates, or
imposing some similarity requirement with a given biological solution would be interesting.

Furthermore, the comparison to the MAPK network also points to the interesting
class of online computation  which does not consume the inputs, whereas in our approach the
input species are consumed, and the synthesized CRN would need to be reinitialized for
another computation.  This limitation is not a problem for one-shot CRN programs such as
those designed for medical diagnosis applications~\cite{CAFRM18msb}, but synthesizing a
CRN for computing an input/output function online appears to be a harder problem worthy
of further theoretical investigation.

\subsubsection*{Acknowledgment}
We acknowledge fruitful discussions with Olivier Bournez, François Lemaire, Gleb Pogudin and Amaury Pouly. This work was supported by
ANR-DFG SYMBIONT ``Symbolic Methods for Biological Networks'' project grant ANR-17-CE40-0036, and
ANR DIFFERENCE ``Complexity theory with discrete ODEs'' project grant ANR-20-CE48-0002.

\bibliographystyle{plain}
\bibliography{contraintes}

\end{document}